\definecolor{lightgray}{gray}{0.893}
\newtheorem{assumption}{Assumption}
\newtheorem{theorem}{Theorem}
\newtheorem{lemma}{Lemma}
\title{Multi-agent In-context Coordination via Decentralized Memory Retrieval}
\author{
Tao Jiang$^{1,2}$, Zichuan Lin$^{3}$\thanks{Joint Corresponding Authors},
Lihe Li$^{1,2}$, Yi-Chen Li$^{1,2}$, Cong Guan$^{1,2}$, 
\\ \textbf{Lei Yuan$^{1,2}$, Zongzhang Zhang$^{1,2}$\footnotemark[1], Yang Yu$^{1,2}$, Deheng Ye$^{3}$}\\   
$^1$ National Key Laboratory of Novel Software Technology, Nanjing University, Nanjing, China\\
$^2$ School of Artificial Intelligence, Nanjing University, Nanjing, China\\
$^3$ Tencent, Shenzhen, China\\
\texttt{\{jiangt,lilh,liyc,guanc,yuanl\}@lamda.nju.edu.cn}, \texttt{\{zzzhang, yuy\}@nju.edu.cn},\\
\texttt{\{zichuanlin, dericye\}@tencent.com}
}
\begin{document}
\date{}
\maketitle
\begin{abstract}
Large transformer models, trained on diverse datasets, have demonstrated impressive few-shot performance on previously unseen tasks without requiring parameter updates. This capability has also been explored in Reinforcement Learning (RL), where agents interact with the environment to retrieve context and maximize cumulative rewards, showcasing strong adaptability in complex settings. However, in cooperative Multi-Agent Reinforcement Learning (MARL), where agents must coordinate toward a shared goal, decentralized policy deployment can lead to mismatches in task alignment and reward assignment, limiting the efficiency of policy adaptation. To address this challenge, we introduce \textbf{M}ulti-\textbf{A}gent \textbf{I}n-\textbf{C}ontext \textbf{C}oordination via Decentralized Memory Retrieval (MAICC), a novel approach designed to enhance coordination by fast adaptation. Our method involves training a centralized embedding model to capture fine-grained trajectory representations, followed by decentralized models that approximate the centralized one to obtain team-level task information. Based on the learned embeddings, relevant trajectories are retrieved as context, which, combined with the agents' current sub-trajectories, inform decision-making. During decentralized execution, we introduce a novel memory mechanism that effectively balances test-time online data with offline memory. Based on the constructed memory, we propose a hybrid utility score that incorporates both individual- and team-level returns, ensuring credit assignment across agents. Extensive experiments on cooperative MARL benchmarks, including Level-Based Foraging (LBF) and SMAC (v1/v2), show that MAICC enables faster adaptation to unseen tasks compared to existing methods. Code is available at \url{https://github.com/LAMDA-RL/MAICC}.
\end{abstract}

\section{Introduction}

In-Context Learning (ICL) has emerged as a compelling paradigm for few-shot generalization, enabling models to tackle novel tasks by interpreting contextual cues without explicit retraining~\cite{icl}. This approach is epitomized by large language models, whose remarkable in-context abilities, unlocked through pretraining on vast web-scale corpora, have set new standards in natural language processing~\cite{survey_icl}. The success of this paradigm has catalyzed a parallel pursuit within Reinforcement Learning (RL) to instill agents with similar on-the-fly policy adaptation capabilities~\cite{survey_icrl}. To this end, the prevailing strategy reformulates RL as a sequence modeling problem~\cite{dt}: agents are trained on diverse trajectory datasets to internalize learning algorithms, allowing them to adapt to novel downstream tasks by conditioning on a few contextual examples~\cite{ad,dpt}. This burgeoning field of In-Context Reinforcement Learning (ICRL) has shown notable promise, but its success has mainly been demonstrated in structured environments such as single-agent grid worlds and game-based tasks.

ICRL has demonstrated strong capabilities for fast adaptation in single-agent environments. Typically, these methods condition on in-context trajectories and maintain a memory that is continuously updated with new online experiences to inform decision-making. Despite its notable success, extending this paradigm to cooperative Multi-Agent Reinforcement Learning (MARL) scenarios presents significant challenges. Unlike the single-agent setting, where an agent aims to maximize its individual cumulative rewards, cooperative MARL requires multiple agents to collaborate towards a shared objective~\cite{survey_marl}. This collaborative nature introduces distinct challenges, particularly when deployed in a decentralized manner~\cite{ctde2016}. Firstly, decentralized execution confines each agent to its local observations, often leading to a biased or incomplete understanding of the overall task characteristics. Secondly, agents typically receive only a shared team-level reward, making it difficult to assess individual contributions. This ambiguity in credit assignment can lead to the ``lazy agent" problem~\cite{vdn}, where certain agents fail to learn effective policies and contribute meaningfully to the team's success. These twin challenges of partial observability and credit assignment critically undermine the efficacy of conventional ICRL approaches in the MARL setting. Therefore, given the proven adaptive capabilities of ICL, a method that enables efficient adaptation to unseen cooperative tasks in decentralized multi-agent settings is urgently needed.

To address the above objective, we propose \textbf{M}ulti-\textbf{A}gent \textbf{I}n-\textbf{C}ontext \textbf{C}oordination via Decentralized Memory Retrieval (MAICC), a framework designed for rapid team coordination under Decentralized Partially Observable Markov Decision Processes (Dec-POMDPs)~\cite{decpomdp}. Specifically, we train a single centralized embedding model to extract fine-grained trajectory representations, and multiple decentralized embedding models for decentralized execution that approximate the centralized model to obtain team-level task information. With these pretrained models, we retrieve relevant multi-agent trajectories to serve as in-context examples. These retrieved trajectories, combined with agents’ current sub-trajectories, are used to guide and improve the decision-making process. During test time, we introduce a novel memory mechanism that efficiently balances an online replay buffer with a multi-task offline dataset for trajectory retrieval. Building upon this memory, we design a hybrid utility score that integrates both individual- and team-level returns, thereby enabling more accurate credit assignment across agents.

We evaluate our method on several standard cooperative MARL benchmarks, including Level-Based Foraging (LBF)~\cite{lbf} and the StarCraft Multi-Agent Challenge (SMAC) v1 and v2~\cite{smac,smacv2}.  Experimental results show that MAICC, equipped with efficient trajectory retrieval for ICL, enables significantly faster adaptation to unseen tasks compared to existing ICRL and multi-task MARL baselines. Additionally, visualizations of the learned trajectory embeddings verify the effectiveness of our embedding model design, capturing both individual- and team-level behavior patterns. Ablation studies further isolate and confirm the contribution of each key component in our framework. Together, these findings demonstrate MAICC’s strong empirical performance, its ability to address the limitations of prior ICRL approaches, and its potential for broader deployment in complex multi-agent scenarios.

\section{Related Work}
\textbf{In-context RL.} By framing RL as a sequence modeling problem, Decision Transformer (DT)~\cite{dt} can make decisions based on provided prompts~\cite{promptdt}. Subsequent studies scaled up model size and training data, enabling agents to exhibit ICL capabilities~\cite{mgdt,gato}. Algorithm Distillation~\cite{ad} takes a significant step towards ICRL by utilizing historical trajectories. This enables agents to automatically improve their performance through trial and error, without updating their parameters. Agentic Transformer (AT)~\cite{at} further demonstrates that cross-episodic contexts can help agents leverage hindsight, thus enabling performance improvement at test time~\cite{idt}. Decision-Pretrained Transformer (DPT)~\cite{dpt} explores an alternative approach by predicting the optimal action given random historical trajectories and the current state. Subsequently, Retrieval-Augmented Decision Transformer (RADT)~\cite{radt} introduces retrieval augmentation into ICRL, utilizing a DT-based embedding model to select relevant historical trajectories and thereby further aid action prediction. However, these methods have only demonstrated effectiveness on single-agent tasks with simple interactions~\cite{regent} and perform poorly on more complex decentralized cooperative tasks. In contrast, our approach achieves efficient trajectory retrieval tailored to the characteristics of Multi-Agent Systems (MASs), thereby facilitating collaborative adaptation to unseen tasks. To the best of our knowledge, our method is the first ICRL approach for Dec-POMDPs~\cite{decpomdp}.

\noindent\textbf{Cooperative multi-agent RL.} Many real-world problems are large-scale and complex, rendering single-agent modeling inefficient and often impractical~\cite{feng2025multi}. These challenges are more effectively addressed within the MAS setting~\cite{dorri2018multi}, where MARL provides a robust framework for solution~\cite{survey_marl}. In cooperative MARL, where agents pursue shared objectives, significant progress has been made in domains such as video games~\cite{li2025comprehensive}, domain calibration~\cite{jiang2024multi}, and financial trading~\cite{huang2024multi}. A central challenge in cooperative MARL is partial observability due to decentralized execution. The Centralized Training with Decentralized Execution (CTDE) framework~\cite{maddpg} addresses this by propagating team-level information to individual agents during training, thereby enhancing coordination at execution. Another key issue is the absence of individual rewards, which leads to the ``lazy agent" problem~\cite{vdn}, where agents fail to improve their policies due to an inability to assess their own contributions. Actor-critic methods such as COMA~\cite{coma} mitigate this by introducing counterfactual baselines for policy updates, while value-based approaches like QMIX~\cite{qmix} achieve implicit credit assignment by enforcing monotonicity in the value function~\cite{qtran,qplex}. In this work, we address both challenges within the ICRL framework by incorporating corresponding modules, thereby enabling rapid adaptation to unseen cooperative tasks.

\section{Background}
\subsection{Multi-Agent Reinforcement Learning} 
A cooperative multi-agent task is typically modeled as a Dec-POMDP~\cite{decpomdp}, defined by the tuple $\mathcal{M} = \langle \mathcal{S}, \mathcal{A}, \mathcal{T}, R, \Omega, O, \mathcal{N}, H, \rho \rangle$. Here, $\mathcal{S}$ and $\mathcal{A}$ denote the state and action spaces, respectively; $\rho \in \Delta(\mathcal{S})$ is the initial state distribution where $\Delta(\mathcal{S})$ represents the set of probability distributions over the state space $\mathcal{S}$; $H \in \mathbb{N}$ is the episode horizon where $\mathbb{N}$ denotes the set of natural numbers. Each episode begins with an initial state $s^0$ sampled from $\rho$. At each time step $h$, given the global state $s^h \in \mathcal{S}$, each agent $j \in \mathcal{N} = \{1, 2, \cdots, n\}$ receives a local observation $o_j^h \in \Omega$ generated by the observation function $O(s^h, j)$, and selects an action $a_j^h \in \mathcal{A}$ according to its individual learnable policy $\pi_j(a_j^h | \tau_j^h)$. Here, $\tau_j^h$ denotes the trajectory $(o_j^0,a_j^0,\cdots,o_j^h)$. The joint action is denoted as $\mathbf{a}^h = (a_1^h, a_2^h, \cdots, a_n^h)$. The environment then transitions to the next state $s^{h+1} \sim \mathcal{T}(\cdot | s^h, \mathbf{a}^h)$ and provides a global reward $r^h = R(s^h, \mathbf{a}^h)$. The episode terminates when a predefined condition is met or after $H$ steps. The objective is to optimize the joint policy $\bm{\pi} = (\pi_1, \pi_2, \cdots, \pi_n)$ to maximize the value function $V^{\mathcal{M}}(\bm{\pi}) = \mathbb{E}_{\bm{\pi}} \left[ \sum_{h=0}^{H-1} r^h \right]$.

\subsection{Decision Transformer}
Transformers, originally developed for sequence modeling in language tasks~\cite{vaswani2017attention}, have been applied to RL by Decision Transformer (DT)~\cite{dt}, which frames decision-making as sequence modeling~\cite{TT}. Instead of learning value functions as in traditional RL methods, DT derives policies from sequences of input tokens within a single trajectory, represented as $(\hat{R}^0, o^0, a^0, \hat{R}^1, o^1, a^1, \cdots)$, where each token corresponds to the Return-To-Go (RTG), observation, and action, respectively. The RTG at time step $h$, denoted as $\hat{R}^h$, is defined as the sum of future rewards: $\hat{R}^h = \sum_{t=h}^{H-1} r^t$. DT is trained in a supervised manner, similar to behavior cloning (BC)~\cite{BC}. During testing, by conditioning on a high RTG, DT can autoregressively generate actions aimed at achieving high cumulative rewards (return).

\subsection{Problem Setting} 
In this paper we study ICRL~\cite{survey_icrl}, a practical form of meta-RL~\cite{survey_metarl}, where agents learn new cooperative Dec-POMDP tasks sampled from $P(\mathcal{M})$ via limited online trials without updating model parameters. During training agents only access datasets $\mathcal{D}=\{\mathcal{D}_i\}$ of trajectories collected under unknown cooperative policies $\bm{\mu}$ in tasks $\mathcal{M}_i$. After pretraining, the model parameters are fixed. At test time, the agent team interacts with a new, unseen environment randomly sampled from $P(\mathcal{M})$ for only $T$ episodes, with the goal of achieving fast coordination without parameter updating. This objective can be formulated as maximizing the expected return in the final adaptation episode under the task distributions: $\max\mathbb{E}_{\mathcal{M}\sim P(\cdot)}V^{\mathcal{M}}(\bm{\pi})$. 

\section{Method}
\begin{figure*}[t]
\centering
\includegraphics[width=0.85\textwidth]{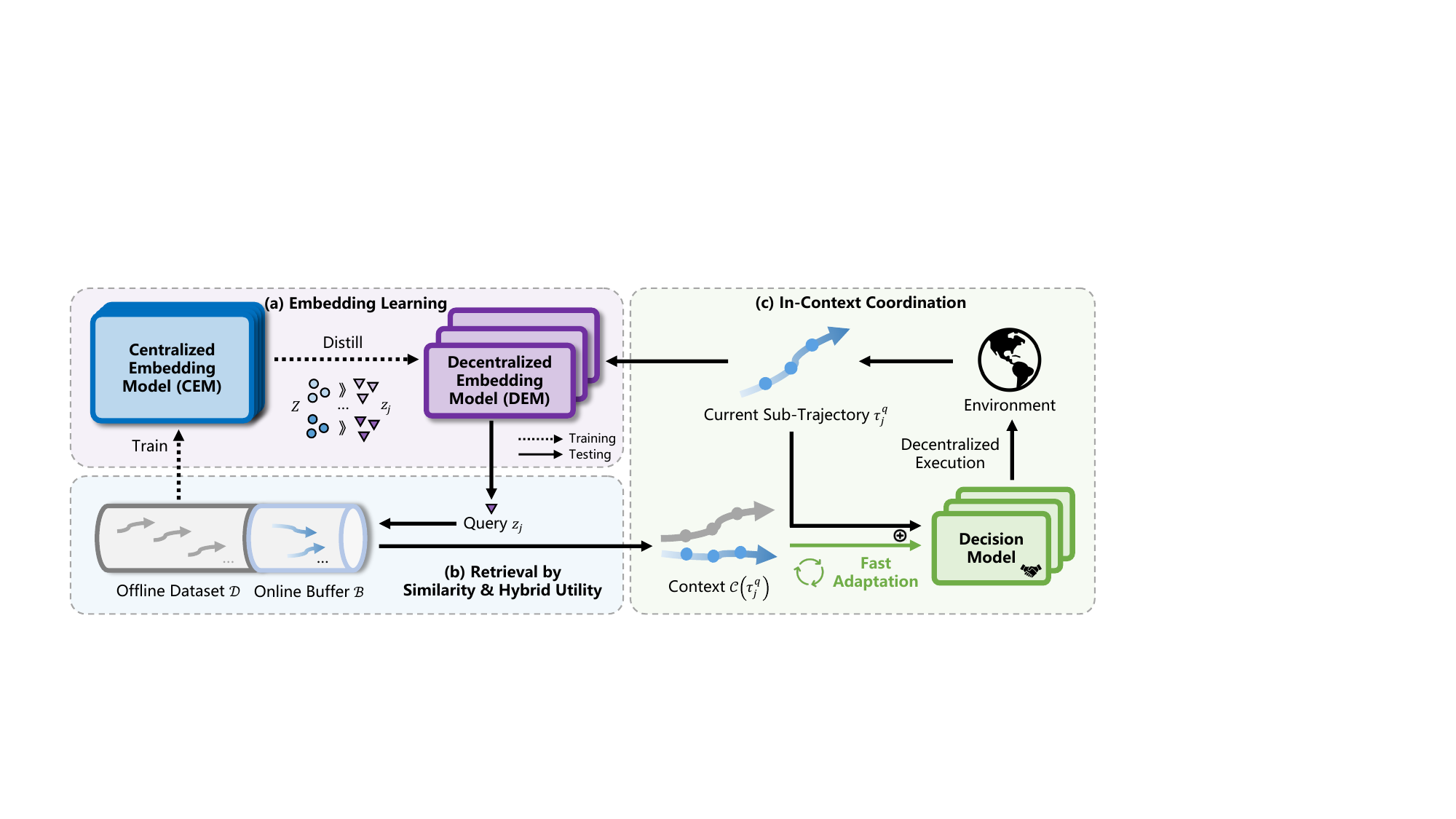} 
\caption{The conceptual workflow of MAICC. Dashed lines show data flow during centralized training, where CEM samples offline trajectories for training and distills team information to DEMs. Solid lines show data flow during decentralized execution, where sub-trajectories retrieve trajectories from the constructed memory based on similarity and hybrid utility score. Blue $\circ$ and purple $\triangledown$ denote different embeddings output by CEM and DEMs, respectively. $\oplus$ denotes concatenation of retrieved trajectories with the current sequence, which helps decision models adapt quickly.}
\label{fig:flow}
\end{figure*}
In this section, we present the MAICC (\textbf{M}ulti-\textbf{A}gent \textbf{I}n-\textbf{C}ontext \textbf{C}oordination) framework, which exploits the ICL capabilities of Transformer-based models for rapid adaptation to unseen cooperative tasks. The overall architecture is shown in Fig.~\ref{fig:flow}. During training, we first learn embedding models to capture the characteristics of multi-agent trajectories for efficient context retrieval (Sec.~\ref{method1}). Specifically, a centralized embedding model (CEM) extracts team-level information via autoregressive prediction, which guides decentralized embedding models (DEMs) for decentralized execution. The DEMs are then used to retrieve trajectories with similar embeddings for a given input, enabling the decision model to generate appropriate actions (Sec.~\ref{method2}). By leveraging these in-context trajectories, the pretrained decision model can infer task characteristics and generalize across diverse tasks. In the testing phase, we introduce a novel memory mechanism that combines an online replay buffer with offline datasets to enhance retrieval efficiency. We further propose a hybrid utility score that integrates individual- and team-level information to select high-quality in-context trajectories, promoting effective coordination (Sec.~\ref{method3}). Finally, we provide a theoretical analysis of the online cumulative regret of our approach (Sec.~\ref{method4}).

\subsection{Multi-Agent Trajectory Embedding Models}
\label{method1}
Efficient multi-agent trajectory retrieval relies on learning high-quality trajectory embeddings. To achieve this, we adopt the CTDE paradigm~\cite{ctde2016,maddpg} and design both centralized and decentralized embedding models. During training, agents have access to global team observations and actions, allowing the CEM to capture fine-grained team-level information. In contrast, during execution, each agent is limited to its own local observations and actions, resulting in less informative embeddings from the DEMs. To address this disparity, we employ the CEM to distill team-level knowledge into the DEMs during training, thereby enhancing the DEMs' representational capacity for decentralized execution.

Formally speaking, we denote the number of agents as $n$. The multi-task offline dataset $\mathcal{D}$ consists of trajectories $\tau = (\bm{o}^0, \bm{a}^0, r^0, \ldots, \bm{o}^{H-1}, \bm{a}^{H-1}, r^{H-1})$, where $\bm{o} = (o_1, \ldots, o_n)$ and $\bm{a} = (a_1, \ldots, a_n)$. Our trajectory embedding models employ three types of tokens: observation $o$, action $a$, and post-step information $\hat{P}$. Following prior work~\cite{at,idt}, the token $\hat{P}$ comprises the global reward, done signal, and task completion flag, which are essential for modeling long-horizon trajectories. We omit the RTG token, as it can cause retrieval of trajectories from irrelevant tasks that happen to have similar RTG values, thereby reducing the informativeness of in-context examples and harming action prediction. 

As illustrated in Fig.~\ref{fig:cem}, the CEM receives the agents’ local observations $\{o^h_j\}_{j=1}^n$, actions $\{a^h_j\}_{j=1}^n$, and post-step information $\hat{P}^h$ at each time step $h$, and outputs the corresponding embeddings: $\{Z^h_{o,j}\}_{j=1}^n, \{Z^h_{a,j}\}_{j=1}^n, Z_p^h = \mathrm{CEM}(\{o^h_j\}_{j=1}^n, \{a^h_j\}_{j=1}^n, \hat{P}^h)$. To be compatible with centralized training, we adapt the causal transformer by introducing intra-team visibility, enabling observation and action tokens within the same team and time step to attend to each other. We further design three loss functions to model the behavior policy ($\mathcal{L}_{\mu}$), reward function ($\mathcal{L}_R$), and observation transition dynamics ($\mathcal{L}_{\mathcal{T}}$) of the trajectory:
\begin{align}
&\mathcal{L}_{\mathrm{CEM}}=\mathcal{L}_{\mu}+\mathcal{L}_R+\mathcal{L}_{\mathcal{T}},\label{eqn.cem}\\
&\mathcal{L}_{\mu}=-\mathbb{E}_{\tau\sim\mathcal{D}}\sum_{h=0}^{H-1}\sum_{j=1}^n \log \mathrm{MLP}_{o\to a}(a^h_j|Z_{o,j}^h),\label{loss_mu}\\
&\mathcal{L}_R=\mathbb{E}_{\tau\sim\mathcal{D}}\sum_{h=0}^{H-1}\left(r^h-\sum_{j=1}^n \mathrm{MLP}_{a\to r}(Z_{a,j}^h)\right)^2,\label{loss_r}\\
&\mathcal{L}_{\mathcal{T}}=-\mathbb{E}_{\tau\sim\mathcal{D}}\sum_{h=0}^{H-2}\sum_{j=1}^n\log\mathrm{MLP}_{p\to o}(o^{h+1}_j|Z_p^h,o^h_j),
\end{align}
where $\mathrm{MLP}$s with different subscripts fit different functions. Eq.~\ref{loss_r} can be regarded as performing implicit credit assignment~\cite{vdn,qmix}, which benefits subsequent decentralized adaptation. 

\begin{figure}[t]
\centering
\includegraphics[width=0.82\columnwidth]{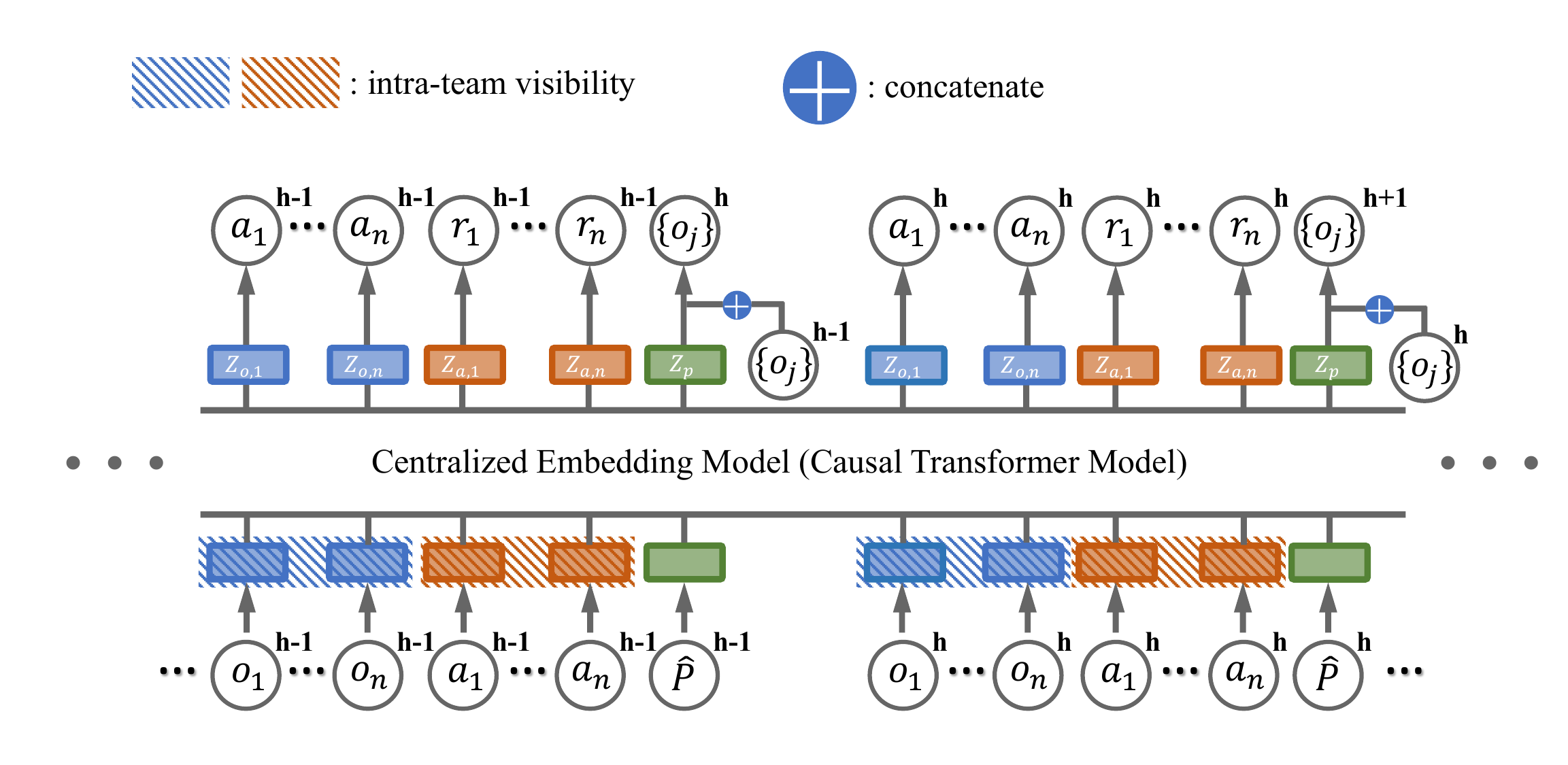} 
\caption{The illustration of CEM. Intra-team visibility enables observation and action tokens within the same team to attend to each other at each time step. The causal transformer predicts individual actions and rewards, while the post-step information token, concatenated with the previous individual observation, is used to predict the next observation.}
\label{fig:cem}
\end{figure}
During decentralized execution, the DEMs capture embeddings using only local information, i.e., $z^h_{o,j}, z^h_{a,j}, z_p^h = \mathrm{DEM}(o^h_j, a^h_j, \hat{P}^h)$. To enhance coordination, we introduce auxiliary objectives that distill team-level information by minimizing the KL divergence between the embeddings generated by the CEM and those produced by the DEMs:
\begin{align}
\mathcal{L}_{\mathrm{DEM}}&=\mathbb{E}_{\tau\sim\mathcal{D}}\sum_{h=0}^{H-1}\sum_{j=1}^n\left[\mathrm{KL}(Z^h_{o,j},z^h_{o,j})+\mathrm{KL}(Z^h_{a,j},z^h_{a,j})\right]\notag\\
    &+\mathbb{E}_{\tau\sim\mathcal{D}}\sum_{h=0}^{H-1}\mathrm{KL}(Z^h_{p},z^h_{p})\label{eqn.dem},
\end{align}
where $\mathrm{KL}(p,q)$ measures the divergence from the target distribution $p$ to the approximate distribution $q$.

\subsection{Retrieval-Based In-Context Decision Training}
\label{method2}
To address diverse cooperative tasks with a single decision model, we use the trained DEMs to retrieve trajectories that inform action generation. Given an individual query sub-trajectory $\tau^q_j = (o^0_j, a^0_j, r^0, \cdots, a^{q-1}_j, r^{q-1}, o^q_j) \sim \mathcal{D}$ up to a certain time step, we first input it into the DEM and extract the embeddings at the final step, which, due to the transformer's long-range dependency modeling, summarizes the entire sub-trajectory. We then apply average pooling on the extracted embeddings over different tokens to obtain the final query embedding, i.e., $z^q_j = \mathrm{MEAN}(z^q_{o,j}, z^{q-1}_{a,j}, z^{q-1}_p)$. Using Maximum Inner Product Search (MIPS)~\cite{faiss}, we retrieve the top-$k$ most relevant in-context trajectories: $\mathcal{C}(\tau^q_j) = \arg\max^k_{\tau^c \in \mathcal{D}} \mathrm{cossim}(z^c, z^q_j)$, where $k$ is the number of in-context trajectories, $\mathrm{cossim}$ denotes cosine similarity, and $z^c$ is the candidate embedding computed in the same manner as the query.

The retrieved in-context trajectories provide additional task-specific information to the current sub-trajectory. We concatenate these trajectories with the query and train the decision model $\pi_\theta$ (a causal transformer with parameter $\theta$ sharing across agents) using the following loss function:
\begin{align}
    \mathcal{L}_\pi=-\mathbb{E}_{\tau^q_j\sim\mathcal{D}}\log\pi_\theta\left(a^q_j|\mathrm{CONCAT}(\mathcal{C}(\tau^q_j),\tau^q_j)\right), \label{eqn.pi}
\end{align}
where $\mathrm{CONCAT}$ denotes the concatenate function. It is worth noting that, in addition to the three types of tokens—observation, action, and post-step information—the decision model also receives a RTG token. Unlike the embedding models, the decision model leverages the RTG from the retrieved trajectory to guide action selection towards achieving the desired return.

\subsection{Decentralized In-Context Fast Coordination}
\label{method3}
After pretraining the embedding and decision models, a new task is randomly sampled from the task distribution $P(\mathcal{M})$. The agent team must then rapidly adapt and coordinate on this task without further parameter updates. During $T$ episodes of interaction, data are stored in an online replay buffer. Agents can retrieve trajectories from both the multi-task offline dataset $\mathcal{D}$, which may exhibit distribution shift, and the online buffer $\mathcal{B}$, which is aligned with the current task but initially contains limited experience. To address this, we propose a selective memory mechanism with exponential time decay: early episodes prioritize offline data to encourage exploration, while later episodes increasingly leverage high-value online trajectories to enhance exploitation. Specifically, we introduce a coefficient $\beta_t = \exp\left(-\lambda \frac{t}{T}\right)$ for episode $t$, where the hyper-parameter $\lambda$ controls the decay rate~\cite{dagger}. We construct a new buffer $\mathcal{B}'$ by sampling from $\mathcal{D}$ with probability $\beta_t$ and from $\mathcal{B}$ with probability $1-\beta_t$. This method is simple, effective, and theoretically grounded.

Based on the constructed memory $\mathcal{B}'$, we further enhance the exploitation of high-value trajectories by introducing a hybrid utility score during inference, defined as $\mathcal{S}_{\mathrm{util}}(\tau) = \alpha\, \mathrm{norm}(\mathcal{R}) + (1-\alpha)\, \mathrm{norm}(\tilde{\mathcal{R}})$. Here, $\mathcal{R} = \sum_{h=0}^{H-1} r^h$ is the global return, $\tilde{\mathcal{R}} = \sum_{h=0}^{H-1} \tilde{r}^h_j$ is the predicted individual return for agent $j$, $\mathrm{norm}(\cdot)$ denotes normalization to $[0, 1]$, and $\alpha \in [0, 1]$ is a hyper-parameter. In Dec-POMDPs, where individual rewards are unavailable, we leverage the pretrained DEMs to predict individual rewards from action embeddings, i.e., $\tilde{r}^h_j = \mathrm{MLP}_{a \to r}(z^h_{a,j})$. This hybrid utility score enables agents to retrieve trajectories that are beneficial at both the individual and team levels, thereby mitigating the ``lazy agent" problem in multi-agent systems. Incorporating the similarity score used during training, the retrieval process is formulated as $\mathcal{C}(\tau^q_j) = \arg\max^k_{\tau^c} \mathcal{S}(\tau^c, \tau^q_j)$, where $\tau^c \in \mathcal{B}'$ and $\mathcal{S}(\tau^c, \tau^q_j) = \mathrm{cossim}(z^c, z^q_j) + \mathcal{S}_{\rm{util}}(\tau^c)$. The decision model then outputs actions conditioned on the concatenation of the retrieved in-context trajectories and the input trajectory: $a \sim \pi_\theta(\cdot\,|\,\mathrm{CONCAT}(\mathcal{C}(\tau^q_j), \tau^q_j))$. The overall pseudo code of MAICC is provided in Alg.~\ref{algorithm}.

\begin{algorithm}
\caption{Multi-agent In-context Coordination via Decentralized Memory Retrieval}
\label{algorithm}
\textbf{Input}: Initialized two trajectory embedding models $\mathrm{CEM}$, $\mathrm{DEM}$, decision model $\pi_\theta$, multi-task offline dataset $\mathcal{D}$, empty online replay buffer $\mathcal{B}$
\begin{algorithmic}[1]
\STATE // Multi-Agent Trajectory Embedding Models
\WHILE{not converged}
\STATE Update $\mathrm{CEM}$ and $\mathrm{DEM}$ by Eq.~\ref{eqn.cem} and Eq.~\ref{eqn.dem} on $\mathcal{D}$
\ENDWHILE
\STATE // Retrieval-Based In-Context Decision Training
\WHILE{not converged}
\STATE Retrieve in-context trajectories $\mathcal{C}$ with $\mathrm{DEM}$
\STATE Update $\pi_\theta$ with $\mathcal{C}$ by Eq.~\ref{eqn.pi}
\ENDWHILE
\STATE // Decentralized In-Context Fast Adaptation
\FOR{$t=1,2,\cdots,T$}
\STATE Construct the memory $\mathcal{B}'$
\WHILE{episode not ended}
\STATE Retrieve in-context trajectories $\mathcal{C}$ with $\mathcal{S}$ and $\mathcal{B}'$
\STATE Decentralized execution with $\pi_\theta$ conditioned on $\mathcal{C}$
\ENDWHILE
\STATE  Store episode trajectory in $\mathcal{B}$
\ENDFOR
\end{algorithmic}
\end{algorithm}
\subsection{Theoretical Analysis}
\label{method4}
In this section, we provide a bound on the online cumulative regret of MAICC. For a given task $\mathcal{M}$ with $|\Omega| = \omega$, $|\mathcal{A}| = A$, and horizon $H$, let $\pi^*$ denote the expert policy. The cumulative regret over $T$ episodes is defined as $\mathbf{Reg}_\mathcal{M} = \sum_{t=1}^T V^\mathcal{M}(\pi^*) - V^\mathcal{M}(\hat{\pi}_t)$, where $\hat{\pi}_t = \beta_t \pi^{\mathcal{D}} + (1-\beta_t) \pi^{\mathcal{B}}_t$ with subscript $t$ is the MAICC policy in episode $t$. Here, $\pi^{\mathcal{D}}$ is the policy that retrieves the in-context trajectories from the offline dataset $\mathcal{D}$, while $\pi^{\mathcal{B}}_t$ retrieves them from the online buffer $\mathcal{B}$ accumulated up to episode $t$, described in Sec.~\ref{method3}.

\begin{assumption}
\textbf{(Sufficiency of Retrieval)} Let $\pi^{\mathcal{B}*}_t$ denote the policy that, for each query $\tau^q$, directly uses the entire online buffer accumulated over $t$ episodes as the in-context input (i.e., without retrieval). For all $(\tau^q, \mathcal{B}, t)$, we have $\pi^\mathcal{B}_t(a|\tau^q) = \pi^{\mathcal{B}*}_t(a|\tau^q)$ for all $a \in \mathcal{A}$.
\end{assumption}

This assumption is trivially satisfied if the number of retrieved trajectories $k$ equals $t$. Even when $k < t$, a carefully selected $k$ trajectories can still capture most of the relevant information. Since Transformer inference time scales quadratically with context length, using a representative subset rather than the entire buffer is both efficient and practical.

\begin{theorem}
Suppose $\sup_\mathcal{M} P(\mathcal{M})/P_{\mathcal{D}}(\mathcal{M}) \le C$ for some $C > 0$, where $P_{\mathcal{D}}(\mathcal{M})$ denotes the training task distribution. Then the expected online cumulative regret of MAICC satisfies $\mathbb{E}_{P(\mathcal{M})}[\mathbf{Reg}_\mathcal{M}] \le \tilde{\mathcal{O}}(CH^{3/2}\omega\sqrt{AT})$.
\end{theorem}

MAICC offers a theoretical guarantee similar to prior ICRL methods~\cite{radt,dpt,tao} as $\tilde{\mathcal{O}}$ is Big-O ignoring poly-logarithmic factors in complexity. In practice, however, the initial online replay buffer may lack sufficiently informative trajectories, leading to inefficient exploration. By leveraging our selective memory mechanism, MAICC adapts to new tasks more efficiently. Experimental results further support this advantage, and detailed derivations are provided in Appendix~\ref{App.B}.

\section{Experiments}
In this section, we evaluate the proposed MAICC framework empirically. We begin by describing the experimental environments and baseline methods in Sec.~\ref{sec:exp_setup}. We then conduct a series of experiments to address the following questions: (1) How does MAICC compare to various baselines in terms of fast coordination (Sec.~\ref{sec:performance})? (2) How effectively do the DEMs capture representations of multi-agent trajectories (Sec.~\ref{sec:visualization})? (3) What is the contribution of each component of MAICC to overall performance (Sec.~\ref{sec:ablation_study})?

\subsection{Experiment Setup}
\label{sec:exp_setup}
\begin{figure*}[t]
\centering
\includegraphics[width=0.98\textwidth]{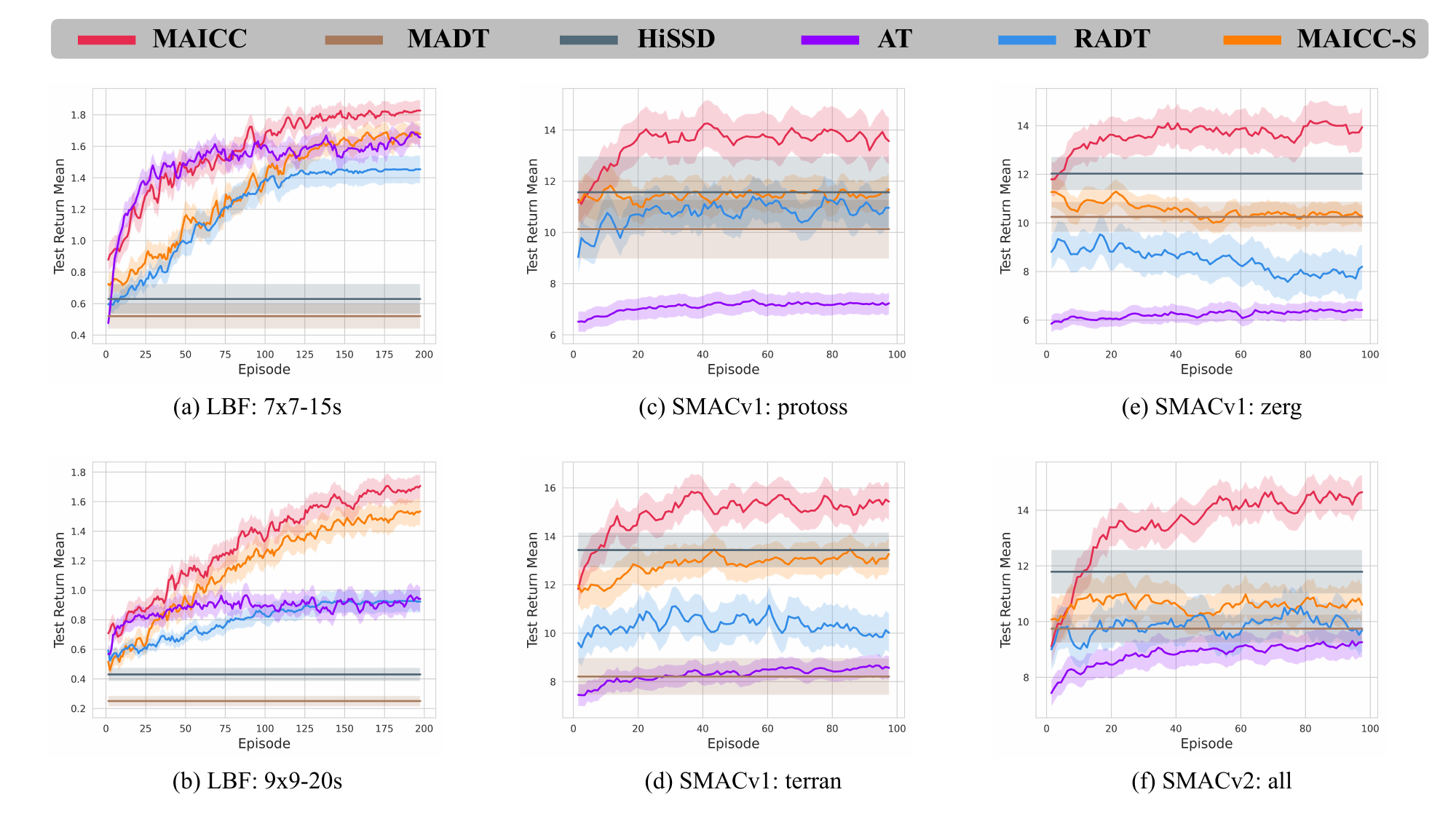} 
\caption{In-context adaptation performance across different scenarios. Each scenario is evaluated over 50 test runs on randomly sampled tasks, with results reported as the mean return and 95\% confidence interval.}
\label{fig:main_results}
\end{figure*}
We evaluate MAICC and baseline methods on several cooperative benchmarks. The first is the \textbf{Level-Based Foraging (LBF)}~\cite{lbf}, a grid-world environment where agents must coordinate to collect food items simultaneously. Each agent observes only its local field of view and must collect food at different locations for each task, within a limited number of time steps. We consider two scenarios: \textit{LBF: 7x7-15s} and \textit{LBF: 9x9-20s}, which differ in grid size and time limits. We further assess MAICC on the \textbf{StarCraft Multi-Agent Challenge (SMAC)}~\cite{smac}, using three sets of tasks where \textit{Protoss}, \textit{Terran}, and \textit{Zerg} units cooperate to defeat enemy units controlled by the built-in AI of the same race. Each task features varying agent types and numbers, with corresponding enemy configurations. Additionally, we evaluate on the \textbf{StarCraft Multi-Agent Challenge-v2 (SMACv2)}~\cite{smacv2}, an extension of SMAC with increased randomness. For this benchmark, we further challenge MAICC by pretraining a single model to handle \textit{all} three task types. For each scenario, we use QMIX~\cite{qmix} to train on multiple tasks, forming the multi-task offline dataset $\mathcal{D}$. Further details on these benchmarks and datasets are provided in Appendix~\ref{App.c}.

MAICC is pretrained on a multi-task offline dataset and learns rapid coordination through online decentralized adaptation. For comparison, we select several baselines with similar settings. MADT~\cite{madt} extends DT to the multi-agent domain and achieves strong performance in single-task scenarios. AT~\cite{at} and RADT~\cite{radt} are state-of-the-art in-context RL algorithms trained on offline data for online adaptation; while effective in single-agent settings, they lack designs specific to multi-agent coordination. HiSSD~\cite{hissd} is a recent multi-task MARL algorithm that learns generalizable skills from a multi-task offline dataset, but does not support online adaptation. MAICC-S is an ablated version of our method, where only the DEM is trained for trajectory modeling during pretraining, without the CEM; all other components remain unchanged. Except for HiSSD, all methods are Transformer-based, and we use the same-size GPT-2 model~\cite{gpt2} for fair comparison.

Experimental results are obtained by training each model with 5 different random seeds. For each seed, performance is evaluated on 10 random tasks, yielding a total of 50 test runs. We report the mean and 95\% confidence intervals. Additional implementation details are provided in Appendix~\ref{App.D}.

\subsection{Main Results}
\label{sec:performance}
\begin{figure}
\centering
\includegraphics[width=0.95\columnwidth]{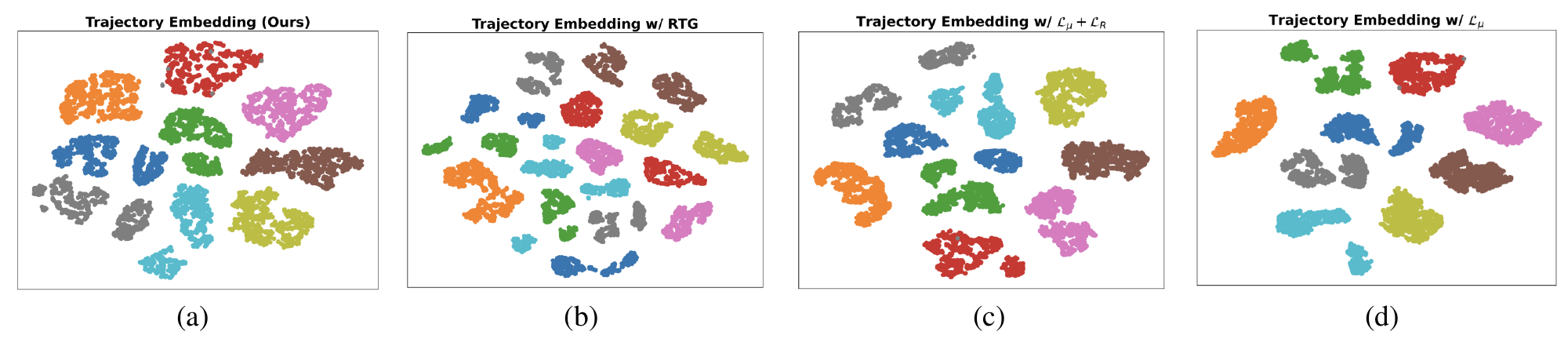} 
\caption{Visualization results illustrating the effects of different embedding model training settings. Each point in the figure represents the embedding of a trajectory from the dataset, with points of the same color corresponding to trajectories from the same task.}
\label{fig:vis}
\end{figure}
We first evaluate the in-context adaptation performance of our method and the baselines across various scenarios. As shown in Fig.~\ref{fig:main_results}, agent teams are required to improve their average return over the task distribution within a limited number of episodes in each scenario. Our method consistently outperforms all baselines across six test scenarios, achieving faster adaptation to unseen cooperative tasks without requiring model parameter updates.

Since MADT and HiSSD lack online adaptation capabilities, their performance is shown as fixed horizontal lines. On SMAC-type tasks, their results are comparable to in-context RL baselines; however, on LBF tasks—where agent observability is more limited—their performance drops significantly, underscoring the importance of in-context adaptation. AT predicts actions based on trajectories from previous episodes, yielding good results only on the small $\textit{LBF: 7x7-15s}$ map. Although RA-DT also utilizes trajectory retrieval, its coarse-grained encoding and lack of adaptation for cooperative scenarios limit its effectiveness. The performance gap between MAICC-S and our method further demonstrates the necessity of explicitly modeling multi-agent characteristics in trajectory embeddings. Notably, in more complex SMAC scenarios, only our method exhibits clear in-context adaptation. The performance gap is most pronounced in the \textit{SMACv2: all} scenario, which features the greatest task diversity, highlighting the strong potential of our approach in large-scale data settings.
\subsection{Visualization of Learned Embeddings}
\label{sec:visualization}
To assess the effectiveness of the trajectory embedding model, we conduct visualization experiments. As shown in Fig.~\ref{fig:vis}, for the \textit{SMACv2: all} scenario, all trajectories are encoded and projected onto a two-dimensional plane using t-SNE~\cite{tsne}. Points with the same color represent trajectories from the same task.

We evaluate four different embedding configurations. In our proposed setting (Fig.~\ref{fig:vis}(a)), the embedding models are trained without the RTG token and utilizes three loss functions, resulting in fine-grained embeddings where trajectories from the same task are grouped. In contrast, incorporating the RTG token (Fig.~\ref{fig:vis}(b)) causes embeddings from the same task to form several small, overlapping clusters with those from other tasks, increasing the risk of retrieving irrelevant trajectories. In Fig.~\ref{fig:vis}(c) and (d), only a subset of the loss functions is used to model trajectories~\cite{radt}. While the model still encodes trajectories from the same task into nearby regions, overfitting occurs, resulting in overly compact clusters due to coarse-grained modeling. When tested on unknown tasks, such trajectory representations lack the generalization capability for extrapolated estimation. These findings highlight the importance of carefully designing both the embedding models and its associated loss functions for effective trajectory retrieval.

\subsection{Ablation Study}
\label{sec:ablation_study}
\begin{table*}[t]
  \centering
  \begin{tabular}{c|cccc|c}
    \toprule
    \textbf{Variants} & \multicolumn{1}{c}{\textbf{EM With RTG}} & \multicolumn{1}{c}{\textbf{Coefficient $\beta$}} & \multicolumn{1}{c}{\textbf{CEM loss}} & \multicolumn{1}{c|}{\textbf{Hyper-parameter $\alpha$}} &  \multicolumn{1}{c}{\textbf{\textit{SMACv2: all} Ret.}} \\
    \midrule
    \textbf{Default} & False & $\beta_t=\exp(-\lambda \frac{t}{T})$ & $\mathcal{L}_{\mu}+\mathcal{L}_R+\mathcal{L}_{\mathcal{T}}$ & $\alpha=0.8$ & 14.51$\pm$0.46\\ 
    \midrule
    (A)& True & & & & 13.52$\pm$0.62 \\
    \midrule
    \multirow{2}{*}{(B)}&  &  $\beta_t=0$ & & & 12.16$\pm$0.72\\
    & & $\beta_t=1$& & & 11.17$\pm$0.64\\
    \midrule
    \multirow{3}{*}{(C)}&  &  & $\mathcal{L}_{\mu}+\mathcal{L}_R$ & &13.43$\pm$0.51\\
    & & & $\mathcal{L}_{\mu}+\mathcal{L}_{\mathcal{T}}$ & $\alpha=1$ &12.32$\pm$0.48\\
    & & & $\mathcal{L}_{\mu}$ & $\alpha=1$ &10.55$\pm$0.39\\
    \midrule
    \multirow{2}{*}{(D)}& & & & $\alpha=1$ & 13.61$\pm$0.40\\
    & & & & $\alpha=0$ & 13.26$\pm$0.66\\
    \bottomrule
  \end{tabular}
  \caption{Ablation Study on MAICC. Unless otherwise noted, all settings follow the default configuration. “Ret.” indicates the average return over 50 test runs (with 95\% confidence interval), evaluated in the final adaptation episode.}
\label{table_abl}
\end{table*}
We evaluated the importance of different MAICC components by systematically modifying the default model and measuring performance changes on the most challenging scenario, \textit{SMACv2: all}, as shown in Tab.~\ref{table_abl}.

In row (A), we examine the effect of incorporating the RTG token during embedding model training. The results show degraded performance, likely due to an increased likelihood of retrieving irrelevant trajectories.

Row (B) explores different values of $\beta$ for memory construction. When the memory consists solely of either the offline dataset or the online buffer—instead of combining both sources using exponential time decay as coefficient—performance drops significantly. This indicates that each data source has limitations, and their weighted combination is crucial for effective adaptation to unseen tasks.

In row (C), we analyze the impact of different CEM loss functions on in-context adaptation. The results indicate that all three loss functions are necessary; fine-grained trajectory modeling enhances action prediction. Notably, omitting $\mathcal{L}_R$ prevents individual return prediction during testing, further reducing overall performance.

Row (D) evaluates the role of the hybrid utility score. Using only the global return $(\alpha=1)$ leads to insufficient credit assignment, while relying solely on the predicted individual return $(\alpha=0)$ may suffer from prediction inaccuracies. Therefore, the hybrid approach, which combines both, yields improved adaptation performance.

\section{Conclusion and Discussion}
In this paper, we address rapid cooperative adaptation in Dec-POMDP settings by proposing the MAICC framework, which enables agent teams to quickly coordinate on unseen tasks without requiring parameter updates. During training, MAICC leverages the CEM to extract fine-grained representations of multi-agent trajectories and guides the DEM to optimize these representations for decentralized execution. Given a current sub-trajectory, agents use the DEM to retrieve and concatenate relevant trajectories for decision model training. During testing, each agent retrieves trajectories from a constructed memory that integrates both online buffer and offline data. Credit assignment is achieved by combining team- and individual-level returns. Experiments on cooperative MARL benchmarks demonstrate that MAICC enables rapid adaptation to previously unseen tasks. A potential constraint of MAICC is that relying solely on exponential time decay for memory construction may limit applicability in certain scenarios; incorporating uncertainty-based metrics~\cite{surver_unc} could further enhance generalization and facilitate real-world deployment.

\clearpage
\newpage

\bibliographystyle{ACM-Reference-Format}
\bibliography{aaai24}

\newpage
\onecolumn
\appendix

\setcounter{secnumdepth}{2} 
\section{Extended Related Work}
\noindent\textbf{Offline meta-RL.} Offline Reinforcement Learning (RL) aims to learn policies from static datasets collected by a behavior policy, without further interaction with the environment~\cite{levineofflinesurvey}. Offline Meta-RL (OMRL) extends this paradigm by training on a distribution of such offline tasks, thereby enabling generalization to novel tasks. Among existing approaches, context-based meta-learning methods employ a context encoder to perform approximate inference over task representations, conditioning the meta-policy on the inferred belief to improve generalization. For example, FOCAL~\cite{focal} introduces a distance metric learning loss to distinguish between tasks, while subsequent works~\cite{corro,gentle,zhang2024debiased} focus on accurately identifying environment dynamics and debiasing representations from the behavior policy. However, this paradigm can be unstable due to function approximation and bootstrapping, particularly in the absence of further online adaptation~\cite{metadt}. In contrast, gradient-based OMRL methods adapt to unseen tasks with a few interactions~\cite{linmodel}, but the required online gradient updates can be prohibitively expensive in real-world applications. Recently, in-context reinforcement learning (ICRL)~\cite{survey_icrl} has emerged as an alternative, enabling rapid adaptation through trial-and-error without parameter updates. This paradigm demonstrates significant potential for practical applications.

\noindent\textbf{Online RL with offline datasets.} Online RL is often impractical in high-risk domains such as autonomous driving~\cite{survey_driving}, due to the high cost and safety concerns associated with trial-and-error learning. In contrast, offline RL leverages large amounts of pre-collected data to safely update policies, but suffers from the distribution shift problem~\cite{levineofflinesurvey}. To address the limitations of both approaches, online RL with offline datasets has emerged as a promising solution~\cite{offline2online}. One line of work pre-trains policies using offline RL and subsequently performs online fine-tuning~\cite{hester2018deep,lee2022offline}. However, the mismatch between offline data and online samples often results in inefficient adaptation during fine-tuning. Another line of research encourages the online agent to mimic behaviors found in the offline data~\cite{levine2013guided,awac}. While this can stabilize policy updates, it still requires a large number of online samples to accomplish tasks due to the inherent limitations of the offline data. A third line of work initializes the replay buffer with offline data, which has been theoretically shown to provide strong guarantees and achieve excellent performance in both theory and practice~\cite{nair2018overcoming,songhybrid}. Our method is similar to this third category: we design a retrieval mechanism based on exponential time decay over the replay buffer and offline data, enabling the agent to quickly adapt to deployment environments.

\section{Detailed Theoretical Analysis}
\label{App.B}
We first show that, under Assumption 1, $\pi_t^\mathcal{B}$ is equivalent to the online test setting in DPT~\cite{dpt}. In DPT, the model predicts optimal actions during both training and testing by leveraging an in-context dataset sampled from the same task, together with the current context. When Assumption 1 holds, the in-context trajectories retrieved by MAICC’s embedding model can serve the same purpose as the in-context dataset in DPT. Therefore, under these conditions, the two approaches are equivalent. To facilitate the derivation of the online cumulative regret for MAICC, we first present a lemma that bounds the total variation distance between the distributions of observations encountered by $\hat{\pi}_t$ and $\pi^\mathcal{B}_t$~\cite{dagger}, where $\hat{\pi}_t = \beta_t \pi^\mathcal{D} + (1-\beta_t)\pi^\mathcal{B}_t$.

\begin{lemma}
    $||d_{\hat{\pi}_t} - d_{\pi^\mathcal{B}_t}|| \le 2H\beta_t$.
\end{lemma}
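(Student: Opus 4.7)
The plan is to bound the total variation between the observation distributions induced by $\hat{\pi}_t$ and $\pi^\mathcal{B}_t$ via a standard coupling argument on their rollouts, analogous to the reduction used in the DAgger analysis. Recall that $\hat{\pi}_t = \beta_t\pi^{\mathcal D} + (1-\beta_t)\pi^{\mathcal B}_t$ is a mixture policy, so at every step it can be realized by first flipping a Bernoulli$(\beta_t)$ coin $b_h$ and then acting according to $\pi^{\mathcal D}$ when $b_h = 1$ and according to $\pi^{\mathcal B}_t$ when $b_h = 0$. This representation is exactly what makes a coupling with $\pi^{\mathcal B}_t$ natural.

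Concretely, I would couple the two rollouts by feeding them the same initial state, the same environment randomness, and, crucially, the same action samples whenever $b_h = 0$: under this coupling, the $\hat{\pi}_t$-trajectory and the $\pi^{\mathcal B}_t$-trajectory coincide step-by-step until the first time $h^\star = \min\{h : b_h = 1\}$. Let $E$ be the event $\{h^\star \ge H\}$ that $\hat{\pi}_t$ never switches to $\pi^{\mathcal D}$ within the horizon; then a union bound over the $H$ steps gives
\begin{equation*}
\Pr(E^c) \;\le\; \sum_{h=0}^{H-1}\Pr(b_h = 1) \;=\; H\beta_t .
\end{equation*}
Because the two coupled rollouts are identical on $E$, for every step $h$ the marginal observation distributions $P_{\hat{\pi}_t}(o^h)$ and $P_{\pi^{\mathcal B}_t}(o^h)$ agree on $E$, so $\mathrm{TV}\bigl(P_{\hat{\pi}_t}(o^h),\,P_{\pi^{\mathcal B}_t}(o^h)\bigr) \le \Pr(E^c) \le H\beta_t$.

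Taking $d_\pi$ to be the (time-averaged) observation visitation $d_\pi(o) = \tfrac{1}{H}\sum_{h=0}^{H-1}P_\pi(o^h = o)$, the triangle inequality on the TV distance yields $\mathrm{TV}(d_{\hat{\pi}_t}, d_{\pi^{\mathcal B}_t}) \le H\beta_t$, and converting TV to the $L_1$ norm (the convention used throughout the paper's regret derivation, as $\|P - Q\|_1 = 2\,\mathrm{TV}(P,Q)$) produces the target bound $\|d_{\hat{\pi}_t} - d_{\pi^{\mathcal B}_t}\| \le 2H\beta_t$.

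The only real subtlety is making the coupling go through in the Dec-POMDP setting: each agent has its own stream of action randomness, but since $\hat{\pi}_t$ is a team-level mixture (a single Bernoulli switch determines whether the whole team follows $\pi^{\mathcal D}$ or $\pi^{\mathcal B}_t$), the same argument applies verbatim after collecting the joint observation-action as the coupled random variable. The main obstacle I anticipate is therefore not in the counting — the union bound is immediate — but in being explicit about which occupancy normalization is used, so that the factor of two on the right-hand side comes out cleanly from the TV-to-$L_1$ conversion rather than from a naive per-step summation that would produce an unwanted extra factor of $H$.
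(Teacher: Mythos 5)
Your proposal is correct and is essentially the paper's own argument in coupling language: the event that $\hat{\pi}_t$ never switches to $\pi^{\mathcal{D}}$ within the horizon has probability $(1-\beta_t)^H$, your union bound $\Pr(E^c)\le H\beta_t$ is exactly the paper's Bernoulli-inequality step $1-(1-\beta_t)^H\le H\beta_t$, and the factor of $2$ arises in both from the trivial bound on the $L_1$ distance between two distributions (equivalently, the TV-to-$L_1$ conversion). The paper phrases this as the mixture identity $d_{\hat{\pi}_t}=(1-\beta_t)^H d_{\pi^{\mathcal{B}}_t}+(1-(1-\beta_t)^H)d$ rather than an explicit coupling, but the two presentations are interchangeable.
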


\begin{proof}
    Let $d$ denote the distribution of observations over $H$ steps conditioned on $\hat{\pi}_t$ selecting $\pi^\mathcal{D}$ at least once during these $H$ steps. Since $\hat{\pi}_t$ executes $\pi^\mathcal{B}_t$ exclusively over $H$ steps with probability $(1-\beta_t)^H$, we have $d_{\hat{\pi}_t} = (1-\beta_t)^H d_{\pi^\mathcal{B}_t} + (1-(1-\beta_t)^H)d$. Thus,
    \begin{align}
        ||d_{\hat{\pi}_t} - d_{\pi^\mathcal{B}_t}||_1 &= (1-(1-\beta_t)^H) ||d - d_{\pi^\mathcal{B}_t}||_1 \notag \\
        &\le 2(1-(1-\beta_t)^H) \notag \\
        &\le 2H\beta_t,
    \end{align}
    where the last inequality follows from the fact that $(1-\beta)^H \ge 1 - \beta H$ for any $\beta \in [0,1]$.
\end{proof}

Next, if we set $\beta_t$ as an exponential decay over time $t$, we can obtain: \begin{theorem}
Suppose $\sup_\mathcal{M} P(\mathcal{M})/P_{\mathcal{D}}(\mathcal{M}) \le C$ for some $C > 0$, where $P_{\mathcal{D}}(\mathcal{M})$ denotes the training task distribution. Then the expected online cumulative regret of MAICC satisfies $\mathbb{E}_{P(\mathcal{M})}[\mathbf{Reg}_\mathcal{M}] \le \tilde{\mathcal{O}}(CH^{3/2}\omega\sqrt{AT})$.
\end{theorem}
\begin{proof}
For a given task $\mathcal{M}\in P_{\mathcal{D}}(\cdot)$, when the reward is bounded within $[0,1]$, the cumulative performance gap between $\hat{\pi}_t$ and $\pi^\mathcal{B}_t$ is bounded as follows:
\begin{align*}
    \sum_{t=1}^T  V^{\mathcal{M}}(\pi^\mathcal{B}_t) - V^{\mathcal{M}}(\hat{\pi}_t)  &\leq \sum_{t=1}^T \sum_{h=1}^H 2h\beta_t\\
    &\le\sum_{t=1}^T2H^2\beta_t.
\end{align*}
Although this appears to be $\mathcal{O}(H^2)$, if we choose $\beta_t = \gamma^{t}$ with $\gamma \in [0,1]$, then there exists an $n_\beta$, defined as the largest $n \leq T$ such that $\beta_n > 1/H$. In this case, the bound can be further refined as follows:
\begin{align*}
    \sum_{t=1}^T  V^{\mathcal{M}}(\pi^\mathcal{B}_t) - V^{\mathcal{M}}(\hat{\pi}_t) &\le 2H\sum_{t=1}^T\min(1,H\beta_t)\\
    &= 2H(n_\beta+H\sum_{t=n_\beta+1}^T\beta_t)\\
    &\le 2H\frac{\log H+1}{1-\gamma},
\end{align*}
which becomes $\mathcal{O}(H\log H)$. In addition, $\beta_t = \gamma^{t}$ is equivalent to $\beta_t = \exp(-\lambda \frac{t}{T})$ as used in the main paper, if we set $\lambda = -T \log \gamma$. Following the results of previous work~\cite{dpt,osband2013more}, we can obtain that the cumulative regret between $\pi^\mathcal{B}_t$ and the expert policy $\pi^*$ is given by:
\begin{align*}
    \sum_{t=1}^T V^{\mathcal{M}}(\pi^*) - V^{\mathcal{M}}(\pi^\mathcal{B}_t) \le \tilde{\mathcal{O}}(H^{3/2}\omega\sqrt{AT}),
\end{align*}
therefore the cumulative regret between $\hat{\pi}_t$ and the expert policy $\pi^*$ is bounded as:
\begin{align*}
    \mathbf{Reg}_\mathcal{M}&=\sum_{t=1}^T V^{\mathcal{M}}(\pi^*) - V^{\mathcal{M}}(\hat{\pi}_t) \\
    &\le \tilde{\mathcal{O}}(H^{3/2}\omega\sqrt{AT})+\mathcal{O}(H\log H)\\
    &=\tilde{\mathcal{O}}(H^{3/2}\omega\sqrt{AT}),
\end{align*}
when $T$ is huge enough. Since $\sup_\mathcal{M} P(\mathcal{M})/P_{\mathcal{D}}(\mathcal{M}) \le C$ for some $C > 0$, then 
\begin{align*}
    \mathbb{E}_{P(\mathcal{M})}[\mathbf{Reg}_\mathcal{M}]&=\int P(\mathcal{M})\mathbf{Reg}_\mathcal{M}d\mathcal{M}\\
    &\le C\int P_{\mathcal{D}}(\mathcal{M})\mathbf{Reg}_\mathcal{M}d\mathcal{M}\\
    &\le\tilde{\mathcal{O}}(CH^{3/2}\omega\sqrt{AT}).
\end{align*}
\end{proof}
In practice, since the online replay buffer may lack sufficiently informative trajectories in the early stages, our constructed memory performs better, as demonstrated by the experimental results.

\section{Extended Benchmark Descriptions}
\label{App.c}
\subsection{Level-Based Foraging}
\begin{figure}[t]
\centering
\includegraphics[width=0.7\columnwidth]{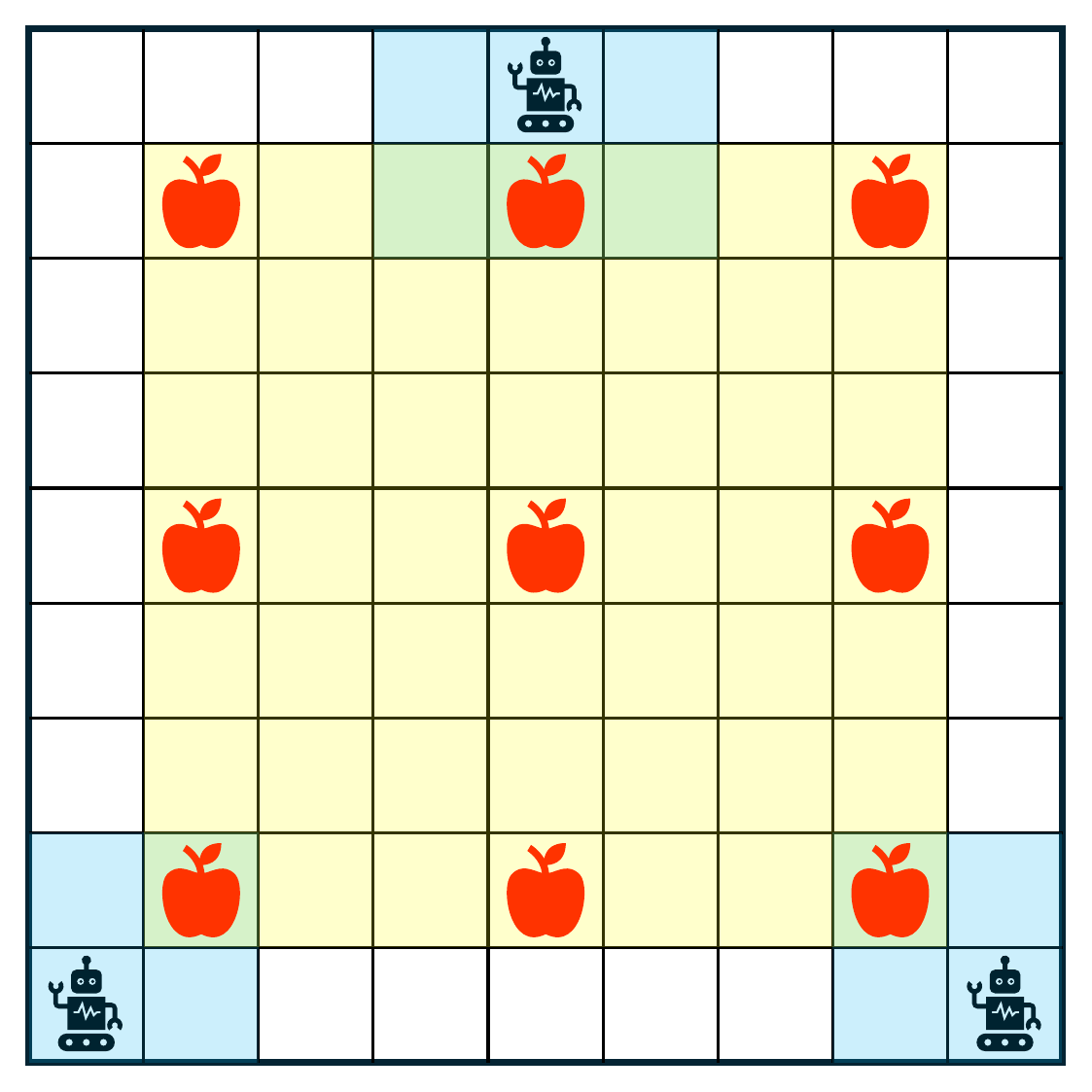}
\caption{Illustration of \textit{LBF: 9x9-20s}. The agents are required to cooperate within a limited number of time steps to concurrently collect the food based on their local observations. The blue areas indicate the agents’ local fields of view, the yellow areas represent possible spawn locations for the food (each corresponding to a specific task), and the red apples denote the food positions included in the training tasks.}
\label{fig:lbf}
\end{figure}
Level-Based Foraging (LBF)~\cite{lbf} is a widely used cooperative grid world environment in which agents must coordinate to collect food simultaneously. To accommodate the setting of rapid decentralized cooperative adaptation, we introduce the following modifications. Taking \textit{LBF: 9x9-20s} in Fig.~\ref{fig:lbf} as an example, three agents are initialized at fixed positions (as shown in the figure), each with a limited field of view covering only the adjacent grid cells (highlighted in blue). The task can only be successfully completed if all three agents simultaneously execute the foraging action around the sole food item on the map. Each agent’s discrete action space consists of six actions: staying still, moving in four directions, and foraging. The team receives a reward of 0.33 each time an agent first reaches the food, and an additional reward of 1 upon successful foraging. An episode terminates either when the agents complete the task or when the time step limit (20 steps in this scenario) is reached. In the figure, yellow areas indicate the possible spawn locations of the food, while red apples denote the nine fixed tasks used during training. We employ the QMIX~\cite{qmix} algorithm, implemented using the EPyMARL codebase, to collect 200 trajectories for each of the nine tasks at 0\%, 25\%, 50\%, 75\%, and 100\% of the maximum return, resulting in a multi-task dataset comprising 9,000 trajectories in total. For the alternative scenario, \textit{LBF: 7x7-15s}, we reduce the grid world size to 7x7 and correspondingly shorten the maximum time steps to 15.

\begin{figure*}[t]
\centering
\includegraphics[width=0.9\textwidth]{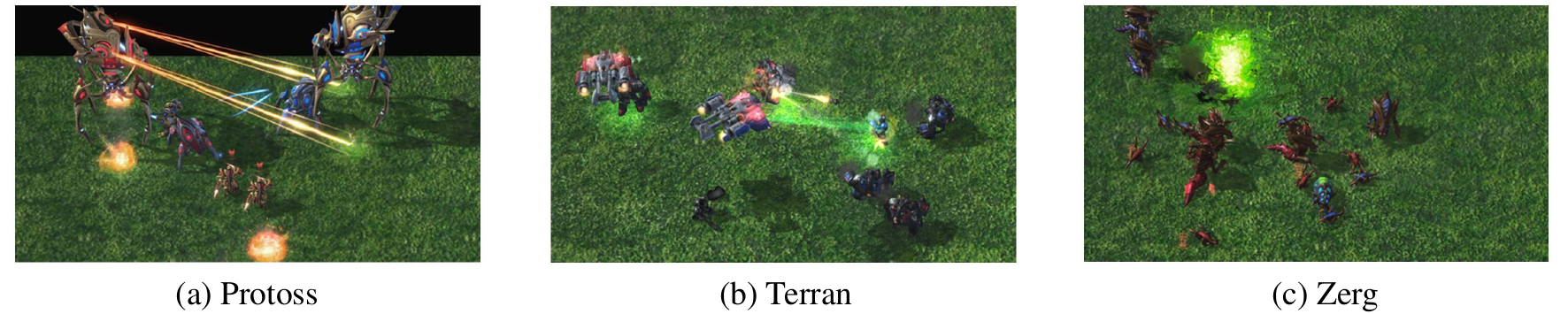} 
\caption{Illustration of SMAC. Agents are divided into three races: (a) Protoss, (b) Terran, and (c) Zerg. In each scenario, a random number of agents with randomly selected unit types from the chosen race fight against an equal number of enemy units controlled by the built-in AI.}
\label{fig:smac}
\end{figure*}

\subsection{StarCraft Multi-Agent Challenge}
StarCraft Multi-Agent Challenge (SMAC) is built upon the popular real-time strategy game StarCraft II and comprises a diverse set of cooperative multi-agent tasks. As illustrated in Fig.~\ref{fig:smac}, the game features three distinct races—Protoss, Terran, and Zerg—each characterized by unique attributes. Every race consists of two basic unit types and one exceptional unit type, each fulfilling different roles and capable of executing various tasks. Notably, the exceptional unit type can exert a substantial influence on the outcome of a battle, as listed in Tab.~\ref{tab:smac}. In each scenario, allied agents engage in combat against enemy agents controlled by the built-in AI. Each agent may choose from a discrete set of actions, including remaining stationary, moving in one of four directions, or attacking an enemy within its field of view. Team rewards are assigned based on the progression of the battle, with a complete victory yielding a reward of 20.

\begin{table}[t]
  \centering
  \begin{tabular}{c|cc|c}
    \toprule
    \textbf{Race} & \multicolumn{2}{c|}{\textbf{Basic Unit Types}} & \multicolumn{1}{c}{\textbf{Exceptional Unit Type}} \\
    \midrule
    \textbf{Protoss} & Stalker & Zealot & Colossus \\
    \midrule
    \textbf{Terran} & Marine & Marauder & Medivac \\
    \midrule
    \textbf{Zerg} & Zergling & Hydralisk & Baneling \\
    \bottomrule
  \end{tabular}
  \caption{The unit types of different races in SMAC.}
  \label{tab:smac}
\end{table}

Based on the above settings, we designed two types of experiments. The first is SMACv1~\cite{smac}, which exhibits relatively low randomness and comprises three scenarios: \textit{protoss}, \textit{terran}, and \textit{zerg}. In each scenario, tasks differ in both the number and types of agents (restricted to basic unit types), with fixed spawn positions for all agents. For each scenario, we consider three agent configurations: 5v5, 7v7, and 10v11. Using the QMIX algorithm, we collected 500 trajectories for each configuration at both 50\% and 100\% of the maximum return, resulting in a total of 3,000 trajectories per scenario. During testing, the number of allied agents is randomly set between 3 and 12, with the number of enemy agents adjusted accordingly.

In addition, under the more complex SMACv2~\cite{smacv2} setting, we designed the most challenging scenario, denoted as \textit{all}. In this scenario, the differences between tasks are not only reflected in the number and types of agents (including both basic and exceptional unit types), but also in their spawn positions, which vary across tasks. This requires agent teams to thoroughly explore in order to understand the current task. Furthermore, within a single task, any of the three aforementioned races may be present, posing an additional challenge for general decision-making models. In this scenario, the multi-task offline dataset contains 9,000 trajectories, collected in a manner similar to that used for the three scenarios described above.

\section{Implementation Details}
\label{App.D}
\subsection{Baselines}
Here, we introduce the baselines used in our experiments, including the Multi-Agent Decision Transformer (MADT), ICRL methods, a multi-task MARL method, and an ablated version of MAICC.

\textbf{MADT}~\cite{madt} extends Decision Transformer (DT)~\cite{dt} to multi-agent systems by performing offline training through sequential modeling. Although it introduces a transformer-based model into MARL, it does not consider collaborative adaptation to unseen tasks. While the method proposes optional online fine-tuning, the expensive and time-consuming gradient updates hinder its practical application. Therefore, in our experimental results, the performance curve of MADT appears as a horizontal line, since its performance does not improve with an increasing number of episodes. By comparing with MADT, we demonstrate the advantage of ICRL-based methods in incorporating cross-trajectory information into decision-making. In contrast, MADT can only make decisions based on the contextual information of the current episode, which prevents the model from understanding the environment and thus hinders multi-task generalization.

\textbf{AT}~\cite{at} feeds cross-trajectory contexts into transformer-based models as prompts, enabling agents to adapt more quickly during the testing phase. During training, multiple historical trajectories are sorted in ascending order of return to form an implicit chain-of-experience, aiming to leverage hindsight for performance improvement. This trial-and-error update method, which does not require gradient updates, has demonstrated impressive results on MuJoCo tasks~\cite{mujoco}. However, this approach does not perform retrieval over in-context trajectories; in complex scenarios, simply selecting the most recent trajectories with higher returns is insufficient for agents to understand the characteristics of the environment, thereby hindering rapid adaptation.

\textbf{RADT}~\cite{radt} is a recent ICRL work that introduces retrieval augmentation for the first time. It first utilizes a pre-trained DT model as an embedding model, and then performs trajectory retrieval based on the embeddings of action tokens. The retrieved similar trajectories are combined with the current input trajectory via cross-attention to assist decision-making, achieving superior adaptation in single-agent tasks compared to previous methods. However, in complex multi-agent scenarios, its coarse-grained encoder design faces challenges, leading to the retrieval of irrelevant trajectories. In addition, it does not include specific module designs to address the challenges of decentralized execution, which causes it to struggle in complex environments.

\textbf{HiSSD}~\cite{hissd} is a recent work that achieves generalizable offline multi-task MARL through learning both common and task-specific skills. Although it performs well on multi-agent task suites with small task discrepancies, it lacks an online adaptation module. As a result, when the differences between tasks are large, agents are unable to adapt to new tasks using the learned skills, which leads to a collapse in cooperation.

\textbf{MAICC-S} is an ablated version of our method, with the only difference being that it does not utilize the centralized training property. Instead, it optimizes the DEM using only the following loss function:
\begin{align}
&\mathcal{L}_{\mathrm{DEM}}=\mathcal{L}_{\mu}+\mathcal{L}_R+\mathcal{L}_{\mathcal{T}},\\
&\mathcal{L}_{\mu}=-\mathbb{E}_{\tau\sim\mathcal{D}}\sum_{h=0}^{H-1}\sum_{j=1}^n \log \mathrm{MLP}_{o\to a}(a^h_j|z_{o,j}^h),\\
&\mathcal{L}_R=\mathbb{E}_{\tau\sim\mathcal{D}}\sum_{h=0}^{H-1}\sum_{j=1}^n\left(r^h-\mathrm{MLP}_{a\to r}(z_{a,j}^h)\right)^2,\label{eq.maicc_s_r}\\
&\mathcal{L}_{\mathcal{T}}=-\mathbb{E}_{\tau\sim\mathcal{D}}\sum_{h=0}^{H-2}\sum_{j=1}^n\log\mathrm{MLP}_{p\to o}(o^{h+1}_j|z_p^h,o^h_j).
\end{align}
Here, the agent performs autoregressive prediction solely based on the embeddings generated from local information, and Eq.~\ref{eq.maicc_s_r} cannot provide credit assignment to support subsequent decentralized execution. These factors together result in its collaborative adaptation ability on unseen tasks being inferior to that of MAICC, especially in SMAC scenarios.

\subsection{Model Architecture}
Both the embedding models and the decision model in MAICC are implemented based on the GPT-2 model from the transformer codebase~\cite{huggingface}. For CEM, we introduce intra-team visibility, which is achieved by modifying the causal mask in the model. This allows each agent’s observation and action tokens at the same timestep to attend to information from teammates, enabling team-level trajectory modeling. In addition, for all baselines and ablation studies in our experiments (except for HiSSD, as it is not transformer-based), we use models of the same scale to ensure the fairness of the experimental results.

\subsection{Hyper-Parameter Settings}
In this paper, multiple hyperparameters are involved. Here, we list the key hyperparameter settings used in our experiments, as shown in Tab.~\ref{hyper-para}. 

\begin{table}[t]
  \centering
  \begin{tabular}{lc}
    \toprule
    \textbf{Attribute} & \textbf{Value} \\
    \midrule
    Embedding size for GPT-2 & $64$ \\ 
    Number of Layers for GPT-2 & $8$ \\ 
    Number of attention heads for GPT-2 & $8$\\
    The dropout rate for GPT-2 & $0.1$\\
    Hidden layers of the MLPs & $[64,64]$ \\
    Learning rate & $5e-4$\\
    Batch size & $32$ \\
    Number of in-context trajectories $k$ & $3/2$\\
    Number of online adaptation episodes $T$ & $200/100$\\
    Coefficient $\alpha$ & $0.8$\\
    Exponential decay rate $\lambda$ & $-\log 0.01$\\
    Optimizer & Adam \\
    \bottomrule
  \end{tabular}
  \caption{The key hyper-parameters in MAICC. The values before the $/$ correspond to the LBF tasks, while the values after the $/$ correspond to the SMAC tasks.}
  \label{hyper-para}
\end{table}

\subsection{Computing Infrastructure}
Most experiments were conducted on a server outfitted with an AMD EPYC 9654 96-Core Processor CPU, a NVIDIA GeForce RTX 4090 GPU, and 125 GB of RAM, running Ubuntu 20.04. MAICC was trained in a Python environment, and all relevant software libraries, along with their names and versions, are specified in the requirements.txt file included with the code.
\section{Additional Experiment Results}
\subsection{Visualization of Learned Embeddings}
\begin{figure}[t]
\centering
\includegraphics[width=0.9\columnwidth]{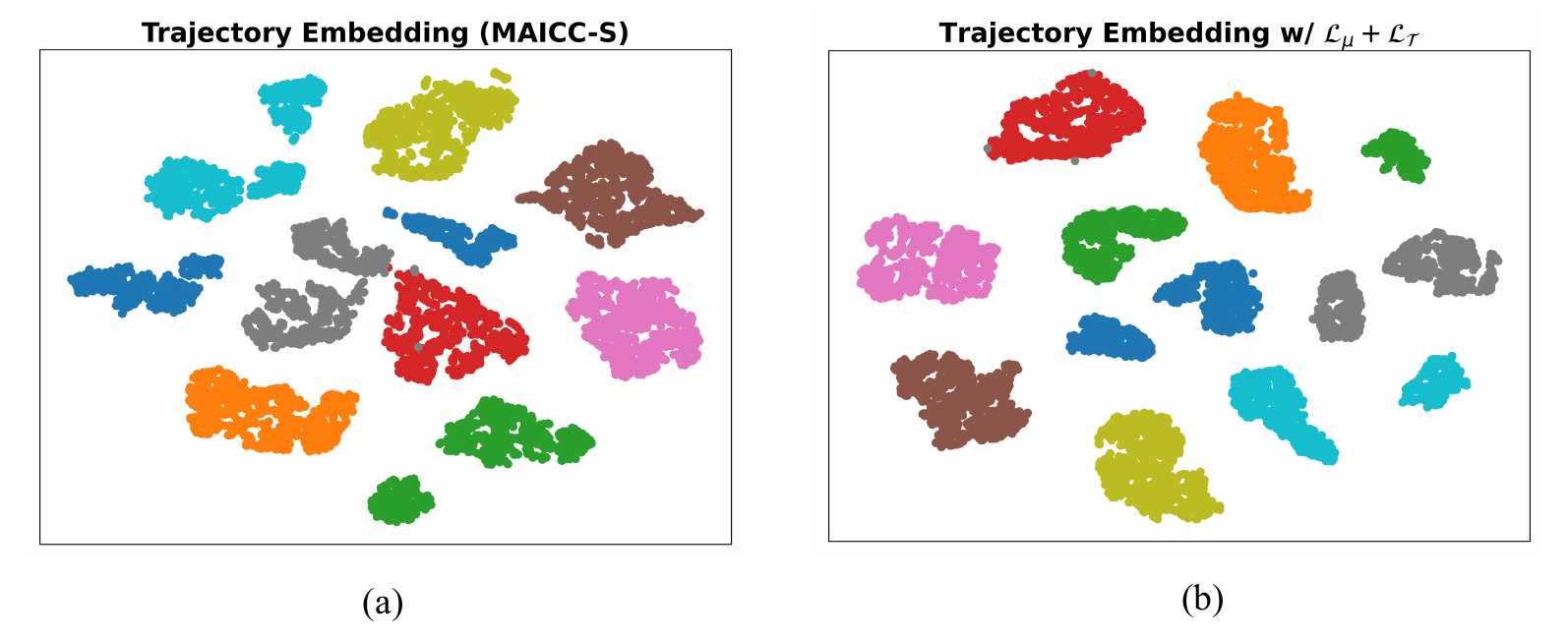}
\caption{Visualization results illustrating the effects of different embedding model training settings. Each point in the figure represents the embedding of a trajectory from the dataset, with points of the same color corresponding to trajectories from the same task.}
\label{fig:abl_emb}
\end{figure}
Here, similar to Sec.~5.3 in the main paper, we further conduct a visualization analysis of the learned trajectory embeddings under different settings, as shown in Fig.~\ref{fig:abl_emb}. In (a), we show the results when the team information distilled by CEM is not used during training. It can be observed that the blue trajectories form two distinct clusters, which will inevitably affect the effectiveness of trajectory retrieval. In (b), for the training results using only $\mathcal{L}_\mu$ and $\mathcal{L}_\mathcal{T}$, we also observe an overfitting phenomenon in the trajectory representations. These results, together with the experimental results in the main paper, demonstrate the effectiveness of our embedding model design.
\subsection{Ablation Study}
Similar to Sec. 5.4 in the main paper, we conducted the same ablation study on \textit{LBF: 9x9-20s}. The results are presented in Tab.~\ref{app:tab_abl}. It can be observed that, in the grid-world environment, our default setting still achieves the best performance, demonstrating the effectiveness of each module. Among them, the design of embedding models has a relatively minor impact on the results, whereas the constructed memory plays a crucial role in experimental performance. Relying solely on either the offline dataset or the online buffer for retrieval leads to severe adaptation failure.
\begin{table}[t]
  \centering
  \begin{tabular}{c|cccc|c}
    \toprule
    \textbf{Variants} & \multicolumn{1}{c}{\textbf{EM With RTG}} & \multicolumn{1}{c}{\textbf{Coefficient $\beta$}} & \multicolumn{1}{c}{\textbf{CEM loss}} & \multicolumn{1}{c|}{\textbf{Hyper-parameter $\alpha$}} &  \multicolumn{1}{c}{\textbf{\textit{LBF: 9x9-20s} Ret.}} \\
    \midrule
    \textbf{Default} & False & $\beta_t=\exp(-\lambda \frac{t}{T})$ & $\mathcal{L}_{\mu}+\mathcal{L}_R+\mathcal{L}_{\mathcal{T}}$ & $\alpha=0.8$ & 1.71$\pm$0.08\\ 
    \midrule
    (A)& True & & & & 1.69$\pm$0.09 \\
    \midrule
    \multirow{2}{*}{(B)}&  &  $\beta_t=0$ & & & 0.68$\pm$0.07\\
    & & $\beta_t=1$& & & 0.84$\pm$0.09\\
    \midrule
    \multirow{3}{*}{(C)}&  &  & $\mathcal{L}_{\mu}+\mathcal{L}_R$ & &1.58$\pm$0.11\\
    & & & $\mathcal{L}_{\mu}+\mathcal{L}_{\mathcal{T}}$ & $\alpha=1$ &1.60$\pm$0.12\\
    & & & $\mathcal{L}_{\mu}$ & $\alpha=1$ &1.44$\pm$0.09\\
    \midrule
    \multirow{2}{*}{(D)}& & & & $\alpha=1$ & 1.64$\pm$0.08\\
    & & & & $\alpha=0$ & 1.66$\pm$0.07\\
    \bottomrule
  \end{tabular}
  \caption{Ablation Study on MAICC. Unless otherwise noted, all settings follow the default configuration. “Ret.” indicates the average return over 50 test runs (with 95\% confidence interval), evaluated in the final adaptation episode.}
  \label{app:tab_abl}
\end{table}
\subsection{Sensitivity of Hyper-Parameters}
\begin{figure}[t]
\centering
\includegraphics[width=0.9\columnwidth]{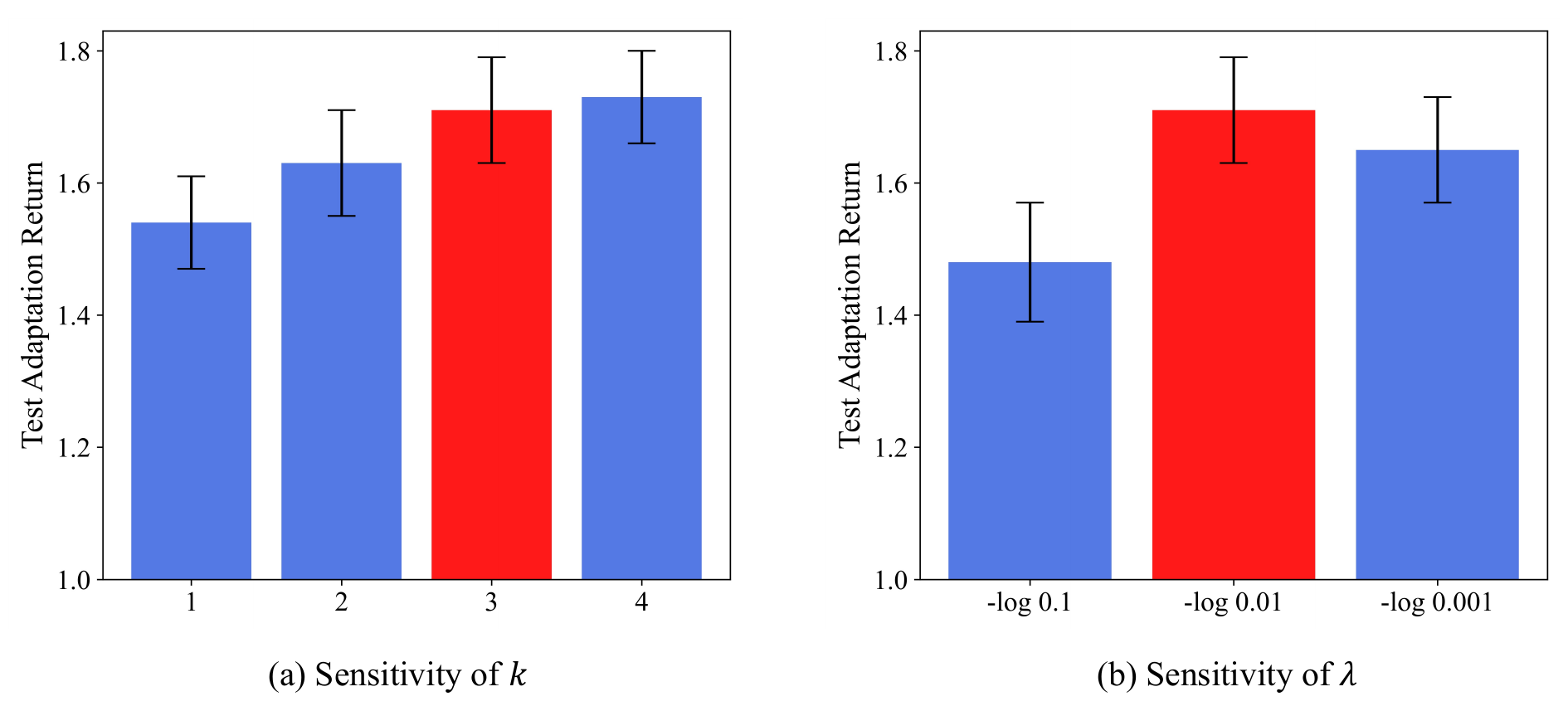}
\caption{Sensitivity of hyper-parameters. The red bars indicate the parameter values selected in our main experiments.}
\label{fig:sensi}
\end{figure}
In MAICC, the selection of certain hyper-parameters affects the experimental results. Therefore, in this subsection, we conduct a sensitivity analysis on two key hyper-parameters in the \textit{LBF: 9x9-20s} scenario. As shown in Fig.~\ref{fig:sensi}(a), we first conduct a sensitivity analysis on the number of in-context trajectories $k$. A larger $k$ allows the context to provide more information, enabling the agent to better understand the task requirements. However, since the inference speed of the transformer grows quadratically with the context length, we ultimately set $k=3$ for LBF environments and $k=2$ for SMAC tasks to balance performance and inference speed. In addition, Fig.~\ref{fig:sensi}(b) presents our study on the exponential decay rate $\lambda$. We observe that if the decay is too fast or too slow, the quality of trajectory retrieval deteriorates, which in turn reduces the final adaptation performance. Therefore, we set $\lambda = -\log 0.01$ in all experiments.
\end{document}